\DeclareMathOperator{\arcosh}{arcosh}
\renewcommand\bra[1]{{\langle{#1}|}}
\renewcommand\ket[1]{{|{#1}\rangle}}
\newtheorem{theorem}{Theorem}
\newtheorem{proposition}[theorem]{Proposition}
\newtheorem{corollary}[theorem]{Corollary}
\theoremstyle{definition}
\newtheorem{example}{Example}
\begin{document}

\author{Tomasz Linowski}
\affiliation{International Centre for Theory of Quantum Technologies, University of Gdansk, 80-308 Gda{\'n}sk, Poland}
\email[Corresponding author: ]{t.linowski95@gmail.com}

\author{Konrad Schlichtholz}
\affiliation{International Centre for Theory of Quantum Technologies, University of Gdansk, 80-308 Gda{\'n}sk, Poland}

\author{{\L}ukasz Rudnicki}
\affiliation{International Centre for Theory of Quantum Technologies, University of Gdansk, 80-308 Gda{\'n}sk, Poland}
\affiliation{Center for Theoretical Physics, Polish Academy of Sciences, 02-668 Warszawa, Poland}

\title{Formal relation between Pegg-Barnett and Paul quantum phase frameworks}

\date{\today}

\begin{abstract}
The problem of defining a hermitian quantum phase operator is nearly as old as quantum mechanics itself. Throughout the years, a number of solutions was proposed, ranging from abstract operator formalisms to phase-space methods. In this work, we make an explicit connection between two of the most prominent approaches, by proving that the probability distribution of phase in the Paul formalism follows exactly from the Pegg-Barnett formalism by combining the latter with the quantum limited amplifier channel. Our findings suggest that the Paul framework may be viewed as a semi-classical limit of the Pegg-Barnett approach.
\end{abstract}

\maketitle

\section{Introduction}

In the history of quantum mechanics, few problems received as much attention as the problem of definition and measurement of the phase of quantum electromagnetic field. Since the early failed attempt by Dirac \cite{phase_operator_Dirac_1927}, a plethora of solutions was proposed, including phase-space approaches \cite{quantum_phase_Wheeler_1985,quantum_phase_phase-space_methods_Leonhardt_1993}, as well as operator formalisms by Susskind and Glogower \cite{quantum_phase_operator_Susskind_1964}, Garrison and Wong \cite{quantum_phase_operator_Garrison_1970}, Paul \cite{quantum_phase_operator_Paul_1974}, L\'{e}vy-Leblond \cite{quantum_phase_operator_Levy-Leblond_1976}, Popov and Yarunin \cite{quantum_phase_operator_Popov_Yarunin_1991}, and, finally, Pegg and Barnett \cite{quantum_phase_operator_Pegg_1988,quantum_phase_operator_Pegg_1989}.

Significant interest was devoted especially to the Pegg-Barnett formalism, in which the obstacles standing in the way of a well-defined quantum phase operator are overcome by reducing the problem to a finite dimension. After its discovery, the formalism quickly gave rise to an alternative derivation \cite{quantum_phase_operator_PB_alternative_derivation_Luis_1993} and extension \cite{quantum_phase_difference_operator_Luis_1993}, among others \cite{quantum_phase_operator_PB_Lahti_2002,quantum_phase_operator_Honarasa_2009,quantum_phase_PB_Perez_Leija_2016,quantum_phase_Ramos-Prieto_2020}. Nowadays, the Pegg-Barnett formalism is used, e.g., to investigate phase properties of various non-classical phenomena, including photon antibunching in the case of photon addition and substraction \cite{addition_subtraction_Malpani19,addition_subtraction_Malpani20}, phase-number squeezing in atom field interactions \cite{Number_phase_squuezing_Faghihi} and nonlinear squeezed states \cite{nonlinear_squeezed_Vamegh}.

In this work, we focus on the formal aspects of the Pegg-Barnett framework and relate it to the Paul formalism, a ``competing'' solution to the quantum phase problem, notable especially for its close connection to phase-space approaches and a clear experimental interpetation \cite{Paul_phase_experiment_Freyberger_1993,FREYBERGER199341}. More precisely, we prove that the probability distribution in the Paul formalism can be exactly obtained from its counterpart in the Pegg-Barnett formalism if the Pegg-Barnett phase operator is combined with the quantum limited amplifier channel. Furthermore, due to the amplifier's association with classicality \cite{quantum_phase_Q_amplification_Schleich_1992} (discussed in more detail below), we interpret our result as the Paul framework being a semi-classical limit of the Pegg-Barnett formalism. In this way, we bridge the two approaches mathematically and physically.


Interestingly, our findings are not the first to apply the quantum limited amplifier to the Paul formalism, albeit the context is different. In Ref. \cite{quantum_phase_Q_amplification_Schleich_1992}, it was shown that the phase distribution of a quantum state in the Paul framework can be realized experimentally through the amplified state. In Ref. \cite{Paul_formalism_QLA_Lalovi_1998}, the Paul framework was proved to be the only quantum phase description consistent with the Glauber model of amplification and the natural expectation that large-amplitude coherent states should have a well-defined phase.



The article is organized as follows. In Sec. \ref{sec:phase_formalisms}, we briefly summarize the two discussed quantum phase formalisms. In Sec. \ref{sec:QLA}, we introduce our main tool: the quantum limited amplifier channel. In Sec. \ref{sec:main_result}, we state and derive our main result. An in-depth discussion of this result is provided in Sec.~\ref{sec:discussion}, with explicit examples in Sec.~\ref{sec:examples}. We conclude in Sec.~\ref{sec:summary}.

\section{Paul and Pegg-Barnett phase formalisms}
\label{sec:phase_formalisms}
As already stated, our main subject of interest concerns the Paul and Pegg-Barnett formalisms, and the connection between them through the quantum limited amplifier channel. Let us briefly introduce and discuss the two phase formalisms. For a detailed review, see, e.g. \cite{quantum_phase_operator_review_Lynch_1995,quantum_phase_operator_review_Barnett_2007}.

\subsection{Paul formalism}
Back in 1974, Paul considered the following family of operators  \cite{quantum_phase_operator_Paul_1974}
\begin{align} \label{eq:Paul_operators}
\begin{split}
    \hat{E}_k & \coloneqq 
        \int \frac{d^2\alpha}{\pi} \, 
        \left(\frac{\alpha}{|\alpha|}\right)^k \ket{\alpha}\bra{\alpha}, \\
    \hat{E}_{-k} & \coloneqq
        \int \frac{d^2\alpha}{\pi} \, 
        \left(\frac{\alpha^*}{|\alpha|}\right)^k \ket{\alpha}\bra{\alpha} = \hat{E}_k^\dag,
\end{split}
\end{align}
where $k\in\mathbb{N}$ and
\begin{align} \label{eq:coherent_state}
\begin{split}
    \ket{\alpha} = \sum_{n=0}^\infty \alpha_n\ket{n},
        \quad \alpha_n = e^{-|\alpha|^2/2}\frac{\alpha^n}{\sqrt{n!}}
\end{split}
\end{align}
is a coherent state with amplitude $\alpha\in\mathbb{C}$. We discuss the case of a single mode, for which $\hat a$ is the annihilation operator (so that $\hat a\ket{\alpha}=\alpha\ket{\alpha}$), and $\ket{n}$ for $n=0,\ldots,\infty$ denotes the Fock basis.

The operators (\ref{eq:Paul_operators}) correspond to the classical quantities $e^{\pm ik\phi}$. This is most easily seen by setting $\alpha = r e^{i\phi}$ and computing the expectation values on an arbitrary state $\hat{\rho}$, which leads to
\begin{align} \label{eq:Paul_operators_expectation}
\begin{split}
    \braket{\hat{E}_k}_{\hat{\rho}} =
        \int_{0}^{2 \pi} d\phi \, e^{ik\phi} \int_{0}^{\infty} \frac{dr}{\pi} \, r \, 
        Q_{\hat{\rho}}\left(re^{i\phi}\right),
\end{split}
\end{align}
and analogously for $\hat{E}_{-k}$. Here,
\begin{align} \label{eq:Husimi_Q_distribution}
    Q_{\hat{\rho}}(\alpha) \coloneqq \bra{\alpha}\hat{\rho}\ket{\alpha},
\end{align}
is the Husimi Q quasiprobability distribution \cite{Q_representation}. Due to the properties of the Husimi function, the rightmost integral in eq. (\ref{eq:Paul_operators_expectation}) is positive for all $\phi$ and, when integrated over $\phi$ from $0$ to $2\pi$, yields one. For this reason, it can be regarded as the probability distribution of $\phi$ in the formalism:
\begin{align} \label{eq:Paul_p}
      P_{\,\textnormal{Paul}}(\phi|\hat{\rho}) \coloneqq \int_{0}^{\infty} \frac{dr}{\pi} \, r \, Q_{\hat{\rho}}\left(re^{i\phi}\right).
\end{align}

It is useful to compare the Paul operators with the Glauber-Sudarshan P representation \cite{P_representation_Glauber} of an arbitrary operator $\hat{X}$, defined through
\begin{align} \label{eq:Glauber_P_distribution}
    \hat{X} = \int \frac{d^2\alpha}{\pi} \, P_{\hat{X}}(\alpha) \, \ket{\alpha}\bra{\alpha}.
\end{align}
We can see that the Paul operators (\ref{eq:Paul_operators}) are essentially operators whose P distribution is equal to the $k$-th powers of the quantity $e^{i\phi}$. 

As a natural generalization, in this paper we consider operators whose P representation is rendered by any complex-valued, bounded function $f$ of $e^{i\phi}$, i.e.
\begin{align} \label{eq:Paul_operator_f}
    \hat{\phi}_{\textnormal{Paul}}[f] \coloneqq 
        \int \frac{d^2\alpha}{\pi} \, f\left(\frac{\alpha}{|\alpha|}\right) \ket{\alpha}\bra{\alpha}.
\end{align}
Clearly, any such operator has similar properties to the original Paul operators. In particular, its expectation value reads
\begin{align} \label{eq:Paul_expectation}
    \braket{\hat{\phi}_{\textnormal{Paul}}[f]}_{\hat{\rho}}  
        = \int_{0}^{2 \pi} d\phi \, f\left(e^{i\phi}\right) \, P_{\,\textnormal{Paul}}(\phi|\hat{\rho}),
\end{align}
with $P_{\,\textnormal{Paul}}(\phi|\hat{\rho})$ being the same probability distribution as in eq. (\ref{eq:Paul_p}). 


One of the main strengths of the Paul formalism is its close association with experimental phase detection, such as homodyne measurements using the eight-port interferometer \cite{Paul_phase_experiment_Freyberger_1993,FREYBERGER199341}. In the strong local oscillatory regime of such an experiment, the phase difference between an arbitrary bosonic state and a reference coherent state is given precisely by the Paul probability distribution (\ref{eq:Paul_p}).


\subsection{Pegg-Barnett formalism}
\label{sec:PB}
Introduced in 1988 \cite{quantum_phase_operator_Pegg_1988} and developed further during the subsequent years, the Pegg-Barnett formalism is built upon a family of $s+1$ ``number-phase states''
\begin{align} \label{eq:PB_theta_m_state}
    \ket{\theta_{t,s}} \coloneqq \frac{1}{\sqrt{s+1}}\sum_{n=0}^{s} e^{i n \theta_{t,s}} \ket{n},
\end{align}
where
\begin{align} \label{eq:PB_theta_m}
    \theta_{t,s} \coloneqq \frac{2\pi t}{s+1}, \qquad t\in\{0,1,\ldots,s\}.
\end{align}
Typically, an arbitrary reference phase $\theta_0$ is added to the definition of $\theta_{t,s}$. Indeed, from a practical point of view the phase itself is not well-defined and measurements must be made relative to an auxiliary state. For example, see  \cite{quantum_phase_difference_operator_Luis_1993} where an operator measuring the phase difference between two systems was considered. Here, we are concerned with the relation between the Pegg-Barnett and Paul formalisms and not the formalisms themselves. For consistency with the Paul framework, we therefore take the liberty to set the reference phase $\theta_0$ to zero. 

For finite $s$, the number-phase states form an orthonormal basis of the $(s+1)$-dimensional Hilbert space being the $s$-photon subspace of the single-mode Fock space. Hence, the Pegg-Barnett phase operator
\begin{align} \label{eq:PB_operator}
    \hat{\phi}_{\textnormal{PB}}^{(s)} \coloneqq \sum_{t=0}^s \theta_{t,s} 
        \ket{\theta_{t,s}}\bra{\theta_{t,s}},
\end{align}
is hermitian. By considering a formal power series of this operator, we can associate a Pegg-Barnett operator to any complex-valued, bounded function $f$ of the phase exponential:
\begin{align} \label{eq:PB_operator_f}
    \hat{\phi}_{\textnormal{PB}}^{(s)}[f] \coloneqq \sum_{t=0}^s f\left(e^{i\theta_{t,s}}\right)     
        \ket{\theta_{t,s}}\bra{\theta_{t,s}}.
\end{align}
Computing its expectation value on state $\hat{\rho}$, we find
\begin{align} \label{eq:PB_expectation}
    \braket{\hat{\phi}_{\textnormal{PB}}^{(s)}[f]}_{\hat{\rho}} = \sum_{t=0}^s f\left(e^{i\theta_{t,s}}\right)
        \bra{\theta_{t,s}}\hat{\rho}\ket{\theta_{t,s}}.
\end{align}
Since $f$ is arbitrary, we conclude that the probability that the state's phase is equal to $\theta_{t,s}$ is therefore given by
\begin{equation} \label{eq:PB_dist}
    \bra{\theta_{t,s}}\hat{\rho}\ket{\theta_{t,s}}.
\end{equation}

Note that the results so far depend on the auxiliary dimension $s$. This is resolved by taking the limit $s\to\infty$. As the limit is considered, the summation over $t$ in eq. (\ref{eq:PB_expectation}) may be replaced by an integral, so that \cite{Barnett_optics}
\begin{align} \label{eq:PB_expectation_continuous}
    \braket{\hat{\phi}_{\textnormal{PB}}^{(s)}[f]}_{\hat{\rho}} = 
    \int_0^{2\pi} d\phi \, f\left(e^{i\phi}\right) P_{\textnormal{PB}}^{(s)}(\phi|\hat{\rho}).
\end{align}
Here,
\begin{equation} \label{eq:PB_dist_continous}
    P_{\textnormal{PB}}^{(s)}(\phi|\hat{\rho})
        \coloneqq \frac{s+1}{2\pi}
        \bra{\phi_s}\hat{\rho}\ket{\phi_s},
\end{equation}
where
\begin{align} \label{eq:PB_phi_state}
    \ket{\phi_s} \coloneqq \frac{1}{\sqrt{s+1}}\sum_{n=0}^{s} e^{i n \phi} \ket{n}
\end{align}
are the continuous counterparts to the discrete phase states (\ref{eq:PB_theta_m_state}). We remark that in the very limit $s=\infty$, the continuous phase states coincide with those in the Susskind-Glogower formalism \cite{quantum_phase_operator_Susskind_1964}.

For normalizable states, the limit $s\to \infty$ of eq. (\ref{eq:PB_expectation_continuous}) can be computed under the integral:
\begin{align} \label{eq:PB_expectation_continuous_limit}
   \lim_{s\to\infty}\braket{\hat{\phi}_{\textnormal{PB}}^{(s)}[f]}_{\hat{\rho}} 
        =\int_0^{2\pi} d\phi \, f\left(e^{i\phi}\right) 
         \lim_{s\to\infty}P_{\textnormal{PB}}^{(s)}(\phi|\hat{\rho}).
\end{align}
In such cases, the formalism has a well-defined probability distribution in the limit of infinite dimension:
\begin{align} \label{eq:PB_dist_continuous_limit}
    \lim_{s\to\infty}P_{\textnormal{PB}}^{(s)}(\phi|\hat{\rho})
\end{align}
and the expectation values can be computed after taking the limit. 

However, as Barnett himself points out \cite{Barnett_optics}, there exist states for which the order of integration and limit cannot be exchanged, meaning that the formula (\ref{eq:PB_expectation_continuous_limit}) is not always valid. Then, the probability distribution does not exist in the limit $s\to\infty$ and the expectation values have to be computed through either of the following expressions:
\begin{align} \label{eq:PB_expectation_continuous_limit_general}
\begin{split}
   \lim_{s\to\infty}\braket{\hat{\phi}_{\textnormal{PB}}^{(s)}[f]}_{\hat{\rho}} 
        &=\lim_{s\to\infty} \sum_{t=0}^s f\left(e^{i\theta_{t,s}}\right)
        \bra{\theta_{t,s}}\hat{\rho}\ket{\theta_{t,s}}\\
        &=\lim_{s\to\infty}\int_0^{2\pi} d\phi \, f\left(e^{i\phi}\right) 
         P_{\textnormal{PB}}^{(s)}(\phi|\hat{\rho}),
\end{split}
\end{align}
where we stress that in the discussed singular cases the limit in the bottom line has to be performed after integration. The postulate that, in general, the expectation values should be computed first, and only then the limit $s\to\infty$ should be taken, is a key feature of the Pegg-Barnett formalism.

\section{Quantum limited amplifier}
\label{sec:QLA}
To make the connection between the Paul and Pegg-Barnett frameworks, we will use the quantum limited amplifier channel, or QLA in short.

The action of the (one-mode) QLA channel of an arbitrary strength $\kappa\geqslant 1$ on state $\hat{\rho}$ is defined~as~\cite{Quantum_limited_amplifier_Mollow_1967,De_Palma_2017}
\begin{align} \label{eq:QLA}
    \mathcal{A}_\kappa(\hat{\rho})\coloneqq
        \Tr_B\left[\hat{U}_\kappa(\hat{\rho}\otimes\ket{0}\bra{0})\hat{U}^\dag_\kappa\right],
\end{align}
where
\begin{align} \label{eq:squeezing_operator}
    \hat{U}_\kappa\coloneqq
        \exp\left[\arcosh\sqrt{\kappa}(\hat{a}^\dag\hat{b}^\dag-\hat{a}\hat{b})\right]
\end{align}
is the two-mode squeezing operator. Here $\hat{b}$ is the annihilation operator associated with the ancillary system traced out in eq. (\ref{eq:QLA}). The case $\kappa=1$ corresponds to the identity channel.

From the physical point of view, QLA may be viewed as the process of pumping particles into the system. Because of its properties, it is sometimes viewed as a tool for making a quantum state more ``classical'' \cite{quantum_phase_Q_amplification_Schleich_1992}. In particular, it was shown that the Glauber-Sudarshan P quasiprobability distribution of an infinitely amplified state is always non-negative \cite{quantum_phase_Q_amplification_Schleich_1992}, a quality that is associated only with semi-classical states~\cite{beam_splitter_classicality_Brunelli_2015}.

The action of the amplifier on a state can be calculated explicitly in the number basis. Substituting the convenient decomposition \cite{Barnett_optics} of the squeezing operator (\ref{eq:squeezing_operator})
\begin{align} \label{eq:squeezing_operator_normal_ordering}
\begin{split}
    \hat{U}_\kappa=\hat{r}_{+,\kappa}^\dag
    \exp\left[-\ln\sqrt{\kappa}\left(\hat{a}^\dag\hat{a}+\hat{b}^\dag\hat{b}+1\right)\right]
    \hat{r}_{-,\kappa},
\end{split}
\end{align}
where $\hat{r}_{\pm,\kappa}\coloneqq\exp\left[\pm\sqrt{\frac{\kappa-1}{\kappa}}\hat{a}\hat{b}\right]$, into the definition (\ref{eq:QLA}), we eventually obtain
\begin{align} \label{eq:QLA_standard_basis}
\begin{split}
    \mathcal{A}_\kappa(\hat{\rho})=&\:\frac{1}{\kappa}\sum_{\substack{j=0}}^\infty
        \left(\frac{\kappa-1}{\kappa}\right)^{j}
        \sum_{\substack{m,n=0}}^\infty
        \rho_{mn}\frac{1}{\sqrt{\kappa}^{m+n}}\\
    &\sqrt{\binom{j+m}{j}\binom{j+n}{j}}\ket{j+m}\bra{j+n}.
\end{split}
\end{align}
where $\rho_{mn}\equiv\bra{m}\hat{\rho}\ket{n}$.

The QLA channel possesses the semi-group property, according to which
\begin{align} \label{eq:QLA_semigroup}
    \mathcal{A}_x\left[\mathcal{A}_y(\hat{\rho})\right] = \mathcal{A}_{xy}(\hat{\rho})
\end{align}
for any $x,y\geqslant 1$. Finally, we remark that occasionally, for increased readability in subscripts, we will denote QLA by
\begin{align} \label{eq:QLA_alternative}
    \mathcal{A}(\kappa,\hat{\rho})\equiv\mathcal{A}_\kappa(\hat{\rho}).
\end{align}

\section{Main results}
\label{sec:main_result}
We come back to the Pegg-Barnett formalism. From the practical point of view, to obtain the final, $s$-independent measurement of phase, one has to perform a number of measurements of finite-dimensional Pegg-Barnett operators given by different values of $s$. For a large enough number of measurement results obtained for large enough $s$, one can predict the limiting measurement result for $s\to\infty$. 

Let us now imagine that in this setup, before measuring the $s$-dimensional Pegg-Barnett operator, we first apply to the system the QLA channel $\mathcal{A}_{1+s\epsilon}$, $\epsilon\ll 1$. In other words, before measuring the $s$-dimensional Pegg-Barnett operator, we prepare the system in the state $\mathcal{A}_{1+s\epsilon}(\hat{\rho})$. Again, for a large enough number of measurement results obtained for large enough $s$, one can predict the limiting measurement result $s\to\infty$. Our main claim is that, in the limit of vanishing amplification strength, $\epsilon \to 0$, the results obtained from this procedure are indistinguishable from analogous results obtained from the Paul formalism for an unamplified state.

We are now in the position to state our main result. For clarity, we present in the form of a proposition.

\begin{proposition} \label{th:main_result}
The Paul probability distribution (\ref{eq:Paul_p}) can be obtained from the Pegg-Barnett continuous probability distribution (\ref{eq:PB_dist_continous}) through the quantum limited amplifier as:
\begin{align} \label{eq:main_result_prob}
    P_{\,\textnormal{Paul}}(\phi|\hat{\rho})
        = \lim_{\epsilon\to 0}\lim_{s\to\infty}
        P_{\,\textnormal{PB}}^{(s)}[\phi|\mathcal{A}_{1+s\epsilon}(\hat{\rho})].
\end{align}
\end{proposition}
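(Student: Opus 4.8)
The plan is to reduce the identity \eqref{eq:main_result_prob} to an elementary statement about the matrix elements $\rho_{mn}\equiv\bra{m}\hat\rho\ket{n}$ by expanding both sides in the Fock basis and comparing them coefficient by coefficient. On the Paul side, I would insert the Fock expansion of the Husimi function, $Q_{\hat\rho}(re^{i\phi}) = e^{-r^2}\sum_{m,n} r^{m+n} e^{i(n-m)\phi}\rho_{mn}/\sqrt{m!\,n!}$, into the radial integral \eqref{eq:Paul_p}. The angular factor $e^{i(n-m)\phi}$ factors out, and the radial integral $\int_0^\infty dr\, r^{m+n+1}e^{-r^2}$ evaluates, via $u=r^2$, to $\tfrac12\Gamma\!\big(\tfrac{m+n}{2}+1\big)$. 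This gives the closed form
\begin{align} \label{eq:plan_Paul_Fock}
    P_{\,\textnormal{Paul}}(\phi|\hat{\rho})
    = \frac{1}{2\pi}\sum_{m,n=0}^\infty e^{i(n-m)\phi}\,\rho_{mn}\,
      \frac{\Gamma\!\big(\tfrac{m+n}{2}+1\big)}{\sqrt{m!\,n!}}.
\end{align}

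On the Pegg-Barnett side, I would first record that $P_{\textnormal{PB}}^{(s)}(\phi|\hat\sigma) = \tfrac{1}{2\pi}\sum_{p,q=0}^s e^{i(q-p)\phi}\sigma_{pq}$, and then substitute $\hat\sigma=\mathcal{A}_\kappa(\hat\rho)$ with the explicit channel action \eqref{eq:QLA_standard_basis} at gain $\kappa=1+s\epsilon$. Writing $p=m+j$, $q=n+j$ and relabelling, the double sum collapses to the same angular structure as \eqref{eq:plan_Paul_Fock}, with each $\rho_{mn}$ now weighted by
\begin{align} \label{eq:plan_coeff}
    c_{mn}^{(s)}(\kappa) \coloneqq
    \frac{1}{\kappa^{\,1+(m+n)/2}}
    \sum_{j=0}^{s-\max(m,n)}
    \left(\frac{\kappa-1}{\kappa}\right)^{\!j}
    \sqrt{\binom{m+j}{j}\binom{n+j}{j}}.
\end{align}
The proposition then reduces to proving $\lim_{\epsilon\to 0}\lim_{s\to\infty} c_{mn}^{(s)}(1+s\epsilon) = \Gamma\!\big(\tfrac{m+n}{2}+1\big)/\sqrt{m!\,n!}$ for each fixed $m,n$.

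To evaluate this double limit I would carry out the $s\to\infty$ limit first, at fixed $\epsilon>0$, so that $\kappa\to\infty$, via a Riemann-sum argument. For fixed $m,n$ and large $j$ one has $\binom{m+j}{j}\sim j^m/m!$, whence $\sqrt{\binom{m+j}{j}\binom{n+j}{j}}\sim j^{(m+n)/2}/\sqrt{m!\,n!}$, while $\big(\tfrac{\kappa-1}{\kappa}\big)^j\sim e^{-j/\kappa}$. Setting $x=j/\kappa$ turns the sum in \eqref{eq:plan_coeff} into $\kappa^{1+(m+n)/2}\int_0^{s/\kappa} dx\, e^{-x} x^{(m+n)/2}/\sqrt{m!\,n!}$, so the prefactor cancels exactly. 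Since $s/\kappa = s/(1+s\epsilon)\to 1/\epsilon$, the $s\to\infty$ limit yields the incomplete gamma integral $\tfrac{1}{\sqrt{m!\,n!}}\int_0^{1/\epsilon} dx\, e^{-x}x^{(m+n)/2}$. Taking $\epsilon\to 0$ then extends the upper limit to infinity and produces precisely $\Gamma\!\big(\tfrac{m+n}{2}+1\big)/\sqrt{m!\,n!}$, matching \eqref{eq:plan_Paul_Fock}.

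The main obstacle is rigour in the interchange of limits and sums. Three points need care: (i) the asymptotic replacements of the binomials and of $(1-1/\kappa)^j$ must be controlled uniformly in $j$ up to the moving cutoff, not merely pointwise; (ii) the Riemann-sum-to-integral passage must be justified as $s\to\infty$ while both the number of terms and $\kappa$ diverge together; and (iii) the term-by-term identification presupposes that the $m,n$-summation commutes with both limits, which calls for a dominated-convergence bound on $c_{mn}^{(s)}(\kappa)$. The order of limits enters exactly here: the finite cutoff $1/\epsilon$ on the gamma integral is the fingerprint of the Pegg-Barnett prescription that the dimension is sent to infinity \emph{first}, with the subsequent vanishing of the per-step gain recovering the full, semi-classical Paul result.
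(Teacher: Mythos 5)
Your reduction to the coefficient identity $\lim_{\epsilon\to0}\lim_{s\to\infty}c^{(s)}_{mn}(1+s\epsilon)=\Gamma\big(\tfrac{m+n}{2}+1\big)/\sqrt{m!\,n!}$ is correct, and your evaluation of it is in substance the same computation as the paper's: your variable $x=j/\kappa$ is exactly the paper's $\mu/\epsilon=r^2$, your Riemann-sum step is the paper's passage from eq.~(\ref{eq:P_final_proof}) to eq.~(\ref{eq:Z_summation_before_exchange}), and your moving cutoff $s/\kappa\to1/\epsilon$ is the paper's finite radial integral up to $r=\sqrt{1/\epsilon}$. The difference is organizational: the paper restricts to pure states by linearity, keeps the entire sum over $m$ inside a modulus squared, and recognizes the Husimi function under a single radial integral, whereas you discard the state's amplitude structure in favor of a coefficient-by-coefficient comparison in the Fock basis.

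That organizational choice is where a genuine gap sits, and it is exactly your point (iii) — which is not a routine technicality. For the inner limit it can be repaired: at fixed $\epsilon$, the bounds $\binom{m+j}{j}\le(j+m)^m/m!$ together with $j,m,n\le s$ give $c^{(s)}_{mn}(1+s\epsilon)\le(2/\epsilon)^{(m+n)/2}/\sqrt{m!\,n!}$ uniformly in $s$, and this dominating array is summable against any $|\rho_{mn}|$ (use $|\rho_{mn}|\le\sqrt{\rho_{mm}\rho_{nn}}$ and Cauchy-Schwarz; the total is at most $e^{2/\epsilon}$), so dominated convergence legitimately gives the incomplete-gamma expression. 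But for the outer limit $\epsilon\to0$ the natural dominating kernel is $K_{mn}=\Gamma\big(\tfrac{m+n}{2}+1\big)/\sqrt{m!\,n!}$, and this kernel does \emph{not} decay away from the diagonal: $K_{nn}=1$, $K_{n,n+k}\to1$ as $n\to\infty$ for every fixed $k$, and more precisely $K_{n,n+k}\approx e^{-k^2/(4n)}$, a band of width $\sim\sqrt{n}$. Consequently $\sum_{m,n}|\rho_{mn}|K_{mn}$ diverges for legitimate states, e.g.\ the pure state with $\psi_n\propto1/\big(\sqrt{n+1}\,\log(n+2)\big)$; for such states your closed-form Paul series is not absolutely convergent (indeed $P_{\,\textnormal{Paul}}(0|\hat{\rho})=+\infty$ there), so both the term-by-term radial integration that produces that series and the coefficient-wise interchange at $\epsilon\to0$ break down precisely for the singular states that make the Proposition delicate. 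The paper's arrangement avoids this entirely: its sum over $m$ carries the factors $\psi_m/\sqrt{m!}$ and hence converges absolutely for every normalized state, the identification with $Q_{\hat{\rho}}\geqslant 0$ is exact rather than term-by-term, and the final limit $\epsilon\to0$ is a monotone limit of a non-negative radial integral, valid even when both sides are infinite. To close your argument you would have to either restrict to states with sufficiently fast-decaying number statistics and add a separate continuity/density argument, or regroup the double sum back into modulus-squared form — at which point you have reconstructed the paper's proof.
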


\noindent Before we prove the proposition, let us make two important remarks. 

Firstly, we stress that it is crucial that the order of limits on the r.h.s. of eqs (\ref{eq:main_result_prob}) cannot be changed. If we took the limits in the opposite way, we would obtain no amplification at all, since $\mathcal{A}_1$ is the identity channel. Consequently, in the process described at the beginning of this section, we would be performing the ordinary Pegg-Barnett measurement, which is obviously different from the Paul measurement.

Secondly, in Proposition \ref{th:main_result}, we called the quantity $P_{\,\textnormal{PB}}^{(s)}[\phi|\mathcal{A}_{1+s\epsilon}(\hat{\rho})]$ the continuous probability distribution in the Pegg-Barnett formalism. However, as discussed extensively in Section \ref{sec:PB}, this is true only if in the formula for the expectation value (\ref{eq:PB_expectation_continuous_limit_general}) for the amplified state one can take the limit $s\to\infty$ under the integral, i.e.
\begin{align} \label{eq:lemma}
\begin{split}
    \lim_{s\to\infty}\braket{\hat{\phi}_{\textnormal{PB}}^{(s)}[f]}_{\mathcal{A}(1+s\epsilon,\,\hat{\rho})}\qquad\qquad\qquad\qquad\qquad\\
        = \int_{0}^{2\pi}d\phi \, f\left(e^{i\phi}\right) 
        \lim_{s\to\infty} P_{\textnormal{PB}}^{(s)}[\phi|\mathcal{A}_{1+s\epsilon}(\hat{\rho})].
\end{split}
\end{align}
Remarkably, we find that this always holds, even if it does not for the unamplified state. Thus, $P_{\,\textnormal{PB}}^{(s)}[\phi|\mathcal{A}_{1+s\epsilon}(\hat{\rho})]$ indeed constitutes the continuous probability distribution in the Pegg-Barnett formalism. This technical result is proved by us in Appendix \ref{app:integral_limit}.

We now proceed to prove our main result.

\begin{proof}[Proof of Proposition \ref{th:main_result}]
Setting $\kappa=1+s\epsilon$ in eq. (\ref{eq:QLA_standard_basis}) and making use of definitions (\ref{eq:PB_dist_continous}-\ref{eq:PB_phi_state}) leads to 
\begin{align} \label{eq:binomial_manipulation_before}
\begin{split}
    P_{\textnormal{PB}}^{(s)}[\phi|\mathcal{A}_{1+s\epsilon}(\hat{\rho})] 
        = \frac{1}{2\pi(1+s\epsilon)} \sum_{\substack{j=0}}^s
        \left(\frac{s\epsilon}{1+s\epsilon}\right)^{j}\\
        \times \sum_{\substack{m,n=0}}^{s-j}
        \rho_{mn}\frac{e^{i(n-m)\phi}}{\sqrt{(1+s\epsilon)}^{m+n}}
        \sqrt{\binom{j+m}{j}\binom{j+n}{j}},
\end{split}
\end{align}
where we note that the summation limits on $m,n,j$ follow from the fact that the Pegg-Barnett operator is limited to dimension $s$, which means that $j+m,j+n\in\{0,\ldots,s\}$. Our goal is to show that after taking the limits $s\to\infty$, $\epsilon\to 0$, in that order, the above quantity is equal to (\ref{eq:Paul_p}).

To simplify our considerations and shorten the notation, let us observe that eq. (\ref{eq:main_result_prob}) is linear in the density operator. For this reason, it is enough to restrict ourselves to pure states: $\hat{\rho}=\ket{\psi}\bra{\psi}$, for which  $\rho_{mn}=\psi_m\psi_n^*$. This assumption has no impact on the correctness of the proof. We get
\begin{align} \label{eq:binomial_manipulation_before_2}
\begin{split}
    P_{\textnormal{PB}}^{(s)}[\phi|\mathcal{A}_{1+s\epsilon}(\hat{\rho})]
        = \frac{1}{2\pi(1+s\epsilon)} \sum_{\substack{j=0}}^s
        \left(\frac{s\epsilon}{1+s\epsilon}\right)^{j}\\
        \times\left\lvert\sum_{\substack{m=0}}^{s-j}
        \psi_m\frac{e^{-im\phi}}{\sqrt{(1+s\epsilon)}^{m}}
        \sqrt{\binom{j+m}{j}}\right\rvert^2.
\end{split}
\end{align}
In the next step, we rewrite
\begin{align}
\begin{split}
    \frac{1}{(1+s\epsilon)^m}\binom{j+m}{j} = \frac{\prod_{k=1}^m(j+k)}{(1+s\epsilon)^m m!}
        = \frac{1}{m!}\prod_{k=1}^m \frac{j+k}{1+s\epsilon}.
\end{split}
\end{align}
Thus, eq. (\ref{eq:binomial_manipulation_before_2}) becomes
\begin{align} \label{eq:P_final_proof}
\begin{split}
    P_{\textnormal{PB}}^{(s)}[\phi|\mathcal{A}_{1+s\epsilon}(\hat{\rho})]
        = \frac{1}{2\pi(1+s\epsilon)} \sum_{\substack{j=0}}^s
        \left(\frac{s\epsilon}{1+s\epsilon}\right)^{j}\\
        \times\left\lvert\sum_{\substack{m=0}}^{s-j}
        \psi_m\frac{e^{-im\phi}}{\sqrt{m!}}
        \sqrt{\prod_{k=1}^m \frac{j+k}{1+s\epsilon}}\right\rvert^2.
\end{split}
\end{align}

At this point, it will be beneficial to turn the summation over $j$ into an integral. We do this in complete analogy to the case of particle in a box approaching
infinite volume \footnote{This is also how the integral formula for the expectation values
(\ref{eq:PB_expectation_continuous}) in the Pegg-Barnett formalism is obtained from its discrete counterpart (\ref{eq:PB_expectation}).}. Instead of summing over $j$ from $0$ to $s$, we sum over $\mu_j \coloneqq j/s$ from $0$ to $1$. In the limit of large $s$, in which we are interested in, the sum approaches an integral. As $\mu_j$ occupies the volume $1/s$ in the space of indices, we have
\begin{align}
\begin{split}
    j \to s\mu, \quad  \sum_{j=0}^s \to s \int_0^1 d\mu.
\end{split}
\end{align}
Therefore, for very large $s$
\begin{align} \label{eq:Z_summation_before_exchange}
\begin{split}
    P_{\textnormal{PB}}^{(s)}[\phi|\mathcal{A}_{1+s\epsilon}(\hat{\rho})]
        = \frac{s}{2\pi(1+s\epsilon)} 
        \int_0^1 d\mu \left(\frac{s\epsilon}{1+s\epsilon}\right)^{s\mu}\\
        \times\left\lvert\sum_{\substack{m=0}}^{s-s\mu}
        \psi_m\frac{e^{-im\phi}}{\sqrt{m!}}
        \sqrt{\prod_{k=1}^m \frac{s\mu+k}{1+s\epsilon}}\right\rvert^2.
\end{split}
\end{align}

We are now ready to take the limit $s\to\infty$. We can do this term by term, which is justified by the fact that each term has a well-defined limit. We have
\begin{align} \label{eq:taking_limit_s}
\begin{split} 
    \frac{s}{(1+s\epsilon)} \to \frac{1}{\epsilon}, \quad
    \left(\frac{s\epsilon}{1+s\epsilon}\right)^{s\mu} \to e^{-\mu/\epsilon}.
\end{split}
\end{align}
Finally, the bottom line of eq. (\ref{eq:Z_summation_before_exchange}), approaches
\begin{align} \label{eq:good_term}
\begin{split}
    \Bigg\lvert\sum_{\substack{m=0}}^{s-s\mu}
        \psi_m\frac{e^{-im\phi}}{\sqrt{m!}}
        &\sqrt{\prod_{k=1}^m \frac{s\mu+k}{1+s\epsilon}}\Bigg\rvert^2 \\
        &\to \left\lvert \sum_{\substack{m=0}}^{\infty}
        \psi_m\frac{e^{-im\phi}}{\sqrt{m!}}
        \sqrt{\frac{\mu}{\epsilon}}^m \right\rvert^2,
\end{split}
\end{align}
which we prove in Appendix \ref{app:bottom_line}.

This altogether yields
\begin{align} \label{eq:altogether}
\begin{split}
    \lim_{s\to\infty} 
    P_{\textnormal{PB}}^{(s)}[\phi|\mathcal{A}_{1+s\epsilon}(\hat{\rho})] =
        \frac{1}{2\pi\epsilon}
        \int_0^1 \frac{d\mu}{\epsilon} e^{-\mu/\epsilon}\\
        \times\left\lvert \sum_{\substack{m=0}}^{\infty}
        \psi_m\frac{e^{-im\phi}}{\sqrt{m!}}
        \sqrt{\frac{\mu}{\epsilon}}^m \right\rvert^2,
\end{split}
\end{align}
or, upon substituting $r^2 \coloneqq \mu/\epsilon$ and rearranging,
\begin{align} \label{eq:Z_recognize_Q}
\begin{split}
    \lim_{s\to\infty} P_{\textnormal{PB}}^{(s)}[\phi|\mathcal{A}_{1+s\epsilon}(\hat{\rho})] = \int_0^{\sqrt{1/\epsilon}} \frac{d r}{\pi} \, r \\
        \times\left\lvert e^{-r^2/2}\sum_{\substack{m=0}}^{\infty}
        \psi_m\frac{e^{-im\phi}}{\sqrt{m!}}r^m \right\rvert^2.
\end{split}
\end{align}
From the equation (\ref{eq:coherent_state}) for coherent state in the number basis and the definition (\ref{eq:Husimi_Q_distribution}) of the Husimi distribution one immediately recognizes that the bottom line equals $Q_{\hat{\rho}}(re^{i\phi})$. Thus,
\begin{align} \label{eq:Z_final}
\begin{split}
    \lim_{s\to\infty} P_{\textnormal{PB}}^{(s)}[\phi|\mathcal{A}_{1+s\epsilon}(\hat{\rho})] = \int_{0}^{\sqrt{1/\epsilon}} \frac{d r}{\pi} \, r \, Q_{\hat{\rho}}\left(re^{i\phi}\right),
\end{split}
\end{align}
which in the limit $\epsilon\to 0$ becomes the probability distribution (\ref{eq:Paul_p}) in the Paul formalism. This concludes the proof.
\end{proof}

As an immediate consequence of Proposition \ref{th:main_result}, the following corollary follows from the definitions of the expectation values in the two formalisms.

\begin{corollary} \label{th:corollary_expectations}
The expectation values (\ref{eq:Paul_expectation}) in the Paul formalism can be obtained from their Pegg-Barnett counterparts (\ref{eq:PB_expectation_continuous}) through the quantum limited amplifier as:
\begin{align} \label{eq:corollary_expectations}
    \braket{\hat{\phi}_{\textnormal{Paul}}[f]}_{\hat{\rho}}
        = \lim_{\epsilon\to 0}\lim_{s\to\infty}
        \braket{\hat{\phi}_{\textnormal{PB}}^{(s)}[f]}_{\mathcal{A}(1+s\epsilon,\,\hat{\rho})}.
\end{align}
\end{corollary}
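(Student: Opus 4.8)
The plan is to obtain the corollary directly from Proposition~\ref{th:main_result} by carrying the bounded function $f$ through the double limit and the $\phi$-integral. First I would substitute the amplified state $\hat{\rho}\to\mathcal{A}_{1+s\epsilon}(\hat{\rho})$ into the continuous Pegg-Barnett expectation-value formula (\ref{eq:PB_expectation_continuous}), writing $\braket{\hat{\phi}_{\textnormal{PB}}^{(s)}[f]}_{\mathcal{A}(1+s\epsilon,\,\hat{\rho})}$ as the $\phi$-integral of $f(e^{i\phi})\,P_{\textnormal{PB}}^{(s)}[\phi|\mathcal{A}_{1+s\epsilon}(\hat{\rho})]$.

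Next I would take the inner limit $s\to\infty$ and invoke the technical identity (\ref{eq:lemma}), proved in Appendix~\ref{app:integral_limit}, which guarantees that for the amplified state this limit may be exchanged with the $\phi$-integration. By eq. (\ref{eq:Z_final}) obtained in the proof of Proposition~\ref{th:main_result}, the resulting inner limit of the distribution equals the truncated radial integral $g_\epsilon(\phi)\coloneqq\int_0^{\sqrt{1/\epsilon}}(dr/\pi)\,r\,Q_{\hat{\rho}}(re^{i\phi})$, so that $\lim_{s\to\infty}\braket{\hat{\phi}_{\textnormal{PB}}^{(s)}[f]}_{\mathcal{A}(1+s\epsilon,\,\hat{\rho})}=\int_0^{2\pi}d\phi\,f(e^{i\phi})\,g_\epsilon(\phi)$.

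The only remaining work is the outer limit $\epsilon\to0$, which I would move under the $\phi$-integral by dominated convergence. Since the integrand $r\,Q_{\hat{\rho}}(re^{i\phi})/\pi$ is nonnegative, decreasing $\epsilon$ only enlarges the radial cutoff $\sqrt{1/\epsilon}$, so $g_\epsilon(\phi)$ increases monotonically to the full Paul distribution (\ref{eq:Paul_p}) and obeys $0\le g_\epsilon(\phi)\le P_{\,\textnormal{Paul}}(\phi|\hat{\rho})$ for every $\phi$. With $f$ bounded, say $|f|\le M$, the family $f(e^{i\phi})\,g_\epsilon(\phi)$ is dominated by $M\,P_{\,\textnormal{Paul}}(\phi|\hat{\rho})$, which is integrable on $[0,2\pi]$ because the Paul distribution is normalized to one. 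Dominated convergence then yields $\lim_{\epsilon\to0}\int_0^{2\pi}d\phi\,f(e^{i\phi})\,g_\epsilon(\phi)=\int_0^{2\pi}d\phi\,f(e^{i\phi})\,P_{\,\textnormal{Paul}}(\phi|\hat{\rho})$, which is exactly $\braket{\hat{\phi}_{\textnormal{Paul}}[f]}_{\hat{\rho}}$ by definition (\ref{eq:Paul_expectation}).

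I expect the genuine difficulty to sit not in this last argument---where the monotonicity of $g_\epsilon$ and the normalization of $P_{\,\textnormal{Paul}}$ make the domination immediate---but rather in the $s\to\infty$ interchange encapsulated by (\ref{eq:lemma}). That step is precisely what Appendix~\ref{app:integral_limit} is devoted to, and it is what distinguishes the amplified case from the unamplified one, where the exchange can fail; here I am entitled to assume it, so the corollary reduces to the clean dominated-convergence step above together with Proposition~\ref{th:main_result}.
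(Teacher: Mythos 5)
Your proposal is correct and takes essentially the same route as the paper, which disposes of the corollary in a single sentence as an ``immediate consequence'' of Proposition~\ref{th:main_result} and the definitions of the expectation values in the two formalisms. Your write-up merely makes explicit the two limit interchanges the paper leaves implicit --- the $s\to\infty$ one, already covered by eq.~(\ref{eq:lemma}) and Appendix~\ref{app:integral_limit}, and the $\epsilon\to 0$ one, which your monotone/dominated-convergence argument with the bound $0\le g_\epsilon(\phi)\le P_{\,\textnormal{Paul}}(\phi|\hat{\rho})$ settles cleanly.
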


\section{Discussion} \label{sec:discussion}
Let us discuss our results, beginning with their physical interpretation. As mentioned before, the amplification process is associated with making quantum phenomena more classical. Notably, it is known to transform the Glauber P distribution into the more semi-classical Husimi Q distribution \cite{quantum_phase_Q_amplification_Schleich_1992}, and the von Neumann entropy into the more classical-like Wehrl entropy \cite{De_Palma_2017}. In the view of our work, this suggests that the Paul formalism may be viewed a semi-classical limit of the Pegg-Barnett formalism.

This interpretation is strengthened by the fact that, while for a generic quantum state the Pegg-Barnett probability distribution may not exist in the infinite dimension, it does for all amplified states, as if removing all the quantum ``singularities''. Note also that the Paul formalism, to which the amplification leads from the Pegg-Barnett framework, is itself invariant under state amplification: 
\begin{equation}
    P_{\,\textnormal{Paul}}\left[\phi|\mathcal{A}_\kappa(\hat{\rho})\right]
        =P_{\,\textnormal{Paul}}(\phi|\hat{\rho}) \textnormal{ for all } \kappa\geqslant 1.
\end{equation}
To see this, one needs to make use of the known relation \cite{De_Palma_2017}
\begin{equation}
    Q_{\mathcal{A}(\kappa,\,\hat{\rho})}(\alpha)=\frac{1}{\kappa}\, Q_{\hat{\rho}}\left(\frac{\alpha}{\sqrt{\kappa}}\right)
\end{equation}
in eq. (\ref{eq:Paul_p}) and change the integration variable to $r'=\sqrt{\kappa}r$. Thus, if we consider the Paul formalism to be the Pegg-Barnett formalism with some of its quantum features suppressed through infinite amplification, it is only natural that further amplification leaves it unaffected.

Why do our results assume the specific amplification choice $\kappa=1+s\epsilon$? In particular, why do they connect the amplification rate to the Pegg-Barnett dimension in a linear way? From the mathematical point of view, the necessity of such connection is clear: if we repeat the derivation of our main results with amplification strength $\kappa$ set to be either independent of $s$ or dependent on it in a non-linear way (e.g. $\kappa=1+s^2\epsilon$), we would quickly find the Pegg-Barnett probability to be vanishing in the limit of infinite amplification. See Appendix \ref{app:linearity}, where we show this explicitly. Thus, setting $\kappa$ to be linear in $s$, as we did, is necessary to obtain a non-trivial limit. This also shows that a similar result to ours cannot hold in the Susskind-Glogower formalism, since there $s=\infty$ from the beginning, making it impossible to set $\kappa$ dependent on $s$.

As a final remark, we observe that Corollary \ref{th:corollary_expectations} can be alternatively formulated as a relation between the phase operators in the two formalisms. Let us deploy the quantum limited attenuator channel, whose action on arbitrary operator $\hat{O}$ reads \cite{De_Palma_2017}
\begin{align}
    \mathcal{E}_\lambda\big(\hat{O}\big)\coloneqq
        \Tr_B\left[\hat{V}_\lambda\big(\hat{O}\otimes\ket{0}\bra{0}\big)
        \hat{V}^\dag_\lambda\right].
\end{align}
Here, $0\leqslant \lambda \leqslant 1$ (where $\lambda=1$ corresponds to the identity channel) and
\begin{align}
    \hat{V}_\lambda\coloneqq
        \exp\left[\arccos\sqrt{\lambda}(\hat{a}^\dag\hat{b}-\hat{a}\hat{b}^\dag)\right].
\end{align}
The quantum limited attenuator is dual to the quantum limited amplifier, by which we mean that for any state $\hat{\rho}$, operator $\hat{O}$ and $\kappa\geqslant 1$ we have
\begin{align}
\begin{split}
    \Tr \mathcal{A}_\kappa(\hat{\rho})\, \hat{O} 
        = \Tr \hat{\rho}\,\frac{1}{\kappa}\mathcal{E}_{1/\kappa}\big(\hat{O}\big).
\end{split}
\end{align}
Therefore, infinite amplification of the state $\kappa\to\infty$ is equivalent to infinite attenuation $\lambda=1/\kappa\to 0$ of the operator.

Applying this to Proposition \ref{th:main_result} we find that for any $\hat{\rho}$ and $f$:
\begin{align} \label{eq:main_result_operator}
    \Tr \hat{\rho}\left[\hat{\phi}_{\textnormal{Paul}}[f]
        -\lim_{\epsilon\to 0}\lim_{s\to \infty}\frac{1}{1+s\epsilon}\mathcal{E}_{1/(1+s\epsilon)}
        \left(\hat{\phi}_{\textnormal{PB}}^{(s)}[f]\right)\right]=0.
\end{align}
Because this equation holds for arbitrary input state $\hat{\rho}$, it is tempting to say that the Paul operator is equal to the infinitely-attenuated Pegg-Barnett operator, i.e.
\begin{align} \label{eq:main_result_operator_explicit}
    \hat{\phi}_{\textnormal{Paul}}[f]
        =\lim_{\epsilon\to 0}\lim_{s\to \infty}\frac{1}{1+s\epsilon}\mathcal{E}_{1/(1+s\epsilon)}
        \left(\hat{\phi}_{\textnormal{PB}}^{(s)}[f]\right).
\end{align}
However, we only showed that the two operators coincide when traced with a formal density operator, i.e. a non-negative, hermitian operator. Therefore, while the Paul and infinitely attenuated Pegg-Barnett operators are clearly connected, whether they are completely equal, as in eq. (\ref{eq:main_result_operator_explicit}), remains to be proven (or disproven). 

In any case, because of the general postulate of the Pegg-Barnett formalism to calculate the expectation values first and only then take the limit of infinite dimension, one has to be careful with such an operator interpretation, as an infinitely-dimensional phase operator is technically not a part of the Pegg-Barnett framework.

\section{Examples} \label{sec:examples}
We illustrate our with a number of examples. We begin with the simple case of thermal states, for which the convergence of the two phase formalisms is easy to see explicitly.

\begin{figure}[!t]
    \centering
    \begin{subfigure}[b]{0.49\textwidth}
         \centering
         \includegraphics[width=\textwidth]{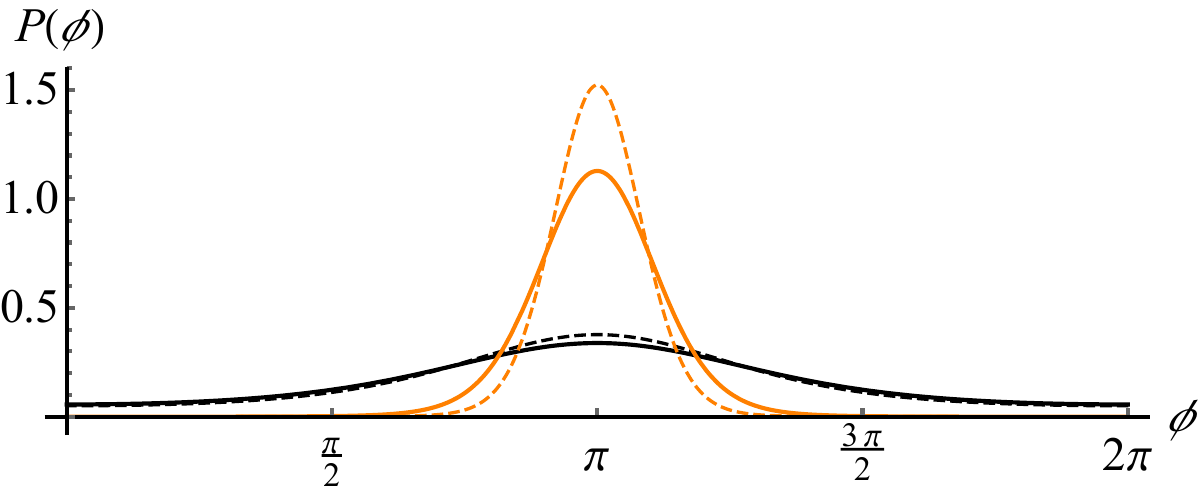}
        \subcaption[]{} \label{fig:a}
     \end{subfigure} 
      \begin{subfigure}[b]{0.49\textwidth}
         \centering
         \vspace{0pt}
         \includegraphics[width=\textwidth]{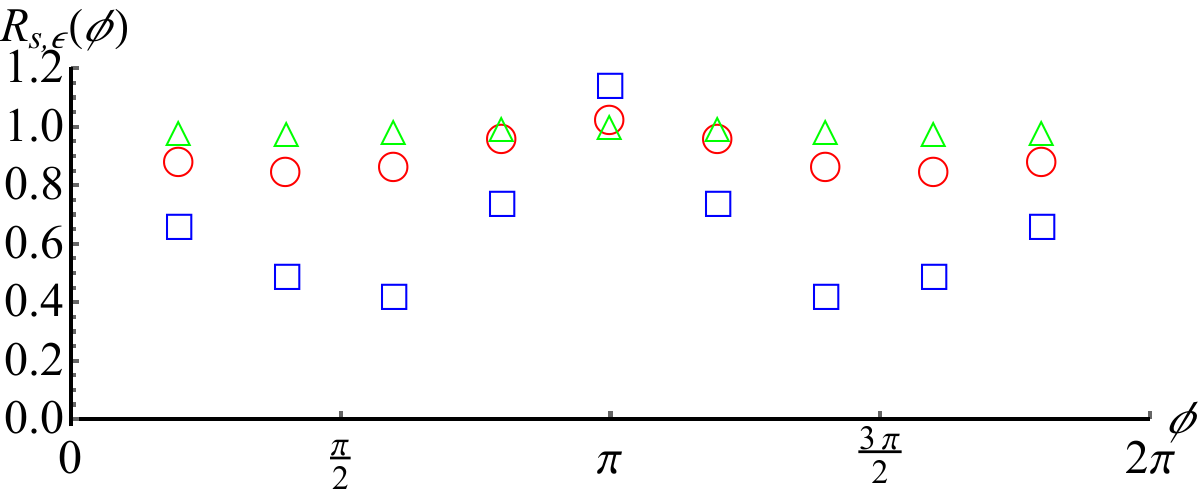}
        \subcaption[]{} \label{fig:b}
     \end{subfigure}
   \caption{a) Comparison between the Paul phase distribution $P_{\,\textnormal{Paul}}(\phi|\hat{\rho})$ (thick lines) and its Pegg-Barnett counterpart $P_{\textnormal{PB}}(\phi|\hat\rho)$ (dashed lines) for $\psi=\pi$ and $r'=0.5$ (black), $r'=2$ (orange). To calculate $P_{\textnormal{PB}}(\phi|\hat{\rho})$, the sum in eq. (\ref{eq:P(phi)c}) was approximated by its first 100 terms. b) Point plot of the ratio $R_{s,\epsilon}(\phi)$ for the coherent state given by $r'=2$ and $\psi=\pi$ calculated at the points $\phi=2\pi t/10$ with $t\in\{1,..,9\}$ for $\epsilon=0.01$. Blue squares, red circles and green triangles stand for $s+1\in\{10^2,10^3,10^4\}$, respectively. As expected, for large $s$ the ratio approaches the value of one.}
\label{fig:coh}
\end{figure}

\begin{example}
Let us consider the thermal states of the harmonic oscillator:
\begin{align} \label{eq:thermal_state_general}
    \hat{g}_\beta \coloneqq \frac{e^{-\beta\hat{a}^\dag \hat{a}}}{\Tr e^{-\beta\hat{a}^\dag \hat{a}}},
\end{align}
where $\beta>0$. For such states, the Paul probability distribution (\ref{eq:Paul_p}) can be calculated analytically, yielding the flat distribution:
\begin{align} \label{eq:Paul_expectation_thermal_example}
    P_{\,\textnormal{Paul}}(\phi|\hat{g}_\beta)
        = \frac{1}{2\pi}.
\end{align}
To compare with the Pegg-Barnett formalism, we begin with eq. (\ref{eq:QLA_standard_basis}), finding that
\begin{align}
    \mathcal{A}_\kappa(\hat{g}_\beta) = \hat{g}_{\beta(\kappa)},
        \quad \beta(\kappa) = \ln\frac{\kappa}{e^{-\beta}+\kappa-1}.
\end{align}
Substituting this into eq. (\ref{eq:binomial_manipulation_before_2}) and simplifying yields
\begin{align}
    P^{(s)}_{\,\textnormal{PB}}(\phi|\hat{g}_{\beta(1+s\epsilon)}) = \frac{1}{2\pi} 
        \left[1-\left(\frac{e^{-\beta}+s\epsilon}{1+s\epsilon}\right)^{s+1}\right].
\end{align}
Taking the limit $s\to\infty$ we get
\begin{align}
    \lim_{s\to\infty}P^{(s)}_{\,\textnormal{PB}}(\phi|\hat{g}_{\beta(1+s\epsilon)}) 
        = \frac{1}{2\pi}\left[1-e^{-\left(1-e^{-\beta}\right)/\epsilon}\right],
\end{align}
which clearly coincides with the Paul probability distribution (\ref{eq:Paul_expectation_thermal_example}) after taking the limit $\epsilon\to 0$.
\end{example}

In the remaining examples, we consider states with non-trivial phase dependence. To study the convergence of the two formalisms, we employ numerical methods. To this end, it is useful to define the following object
\begin{align} \label{eq:R}
    R_{s,\epsilon}(\phi)
        \coloneqq \frac{P_{\,\textnormal{PB}}^{(s)}
        [\phi|\mathcal{A}_{1+s\epsilon}(\hat{\rho})]}
        {P_{\,\textnormal{Paul}}(\phi|\hat{\rho})},
\end{align}
which is simply the ratio of the Pegg-Barnett ``amplified'' probability distribution to the Paul probability distribution. According to Proposition \ref{th:main_result}, this ratio should approach the value of one for large $s$ and small $\epsilon$.

\begin{example}
As a second example, let us consider a coherent state $\hat{\rho}_\alpha=\ket{\alpha}\bra{\alpha}$ with amplitude $\alpha = r'e^{i\psi}$. This example is especially relevant for the study of quantum phase, since the phase of such a coherent state is approximately equal to $\psi$.

In this case, the Husimi Q distribution equals
\begin{equation}
    Q_{\hat{\rho}_\alpha}(re^{i\phi})=e^{-r^2-r'^2+2rr'\cos(\phi-\psi)},
\end{equation}
resulting in the following Paul phase distribution:
\begin{multline}
    P_{\,\textnormal{Paul}}(\phi|\hat{\rho}_\alpha)=\frac{e^{-r'^2} }{2 \pi }\left(1+\sqrt{\pi } r' \cos (\phi -\psi ) e^{r'^2 \cos ^2(\phi -\psi )}\right. \\
   \times \left[\text{erf}\left[r' \cos (\phi -\psi )\right]+1\right]\Big),
\end{multline}
where $\text{erf}$ stands for the error function. On the other hand, calculating from definition, the corresponding Pegg-Barnett phase distribution reads
\begin{equation}
    P_{\,\textnormal{PB}}(\phi|\hat{\rho}_\alpha)=\frac{e^{-r'^2}}{2 \pi } \left| \sum _{n=0}^{\infty } \frac{e^{i n (\psi -\phi )} r'^n}{\sqrt{n!}}\right|^2.\label{eq:P(phi)c}
\end{equation}
Its ``amplified'' version, by which we mean here $P_{\textnormal{PB}}^{(s)}[\phi|\mathcal{A}_{1+s\epsilon}(\hat{\rho}_\alpha)]$, follows readily by substituting $\psi_{m}=\alpha_{m}$, with $\alpha_n$ as in eq. (\ref{eq:coherent_state}), into eq. (\ref{eq:P_final_proof}).

We compare the two distributions in Figure \ref{fig:a}. As expected from coherent states, both phase distributions peak at $\psi=\pi$, with the effect being more pronounced for larger $r'$. While the Paul and Pegg-Barnett frameworks both give the same qualitative results, the Paul framework yields a noticeably less pronounced peak. Nonetheless, the Paul distribution can be obtained from the Pegg-Barnett distribution by using the quantum limited amplifier, as seen from Figure \ref{fig:b}.
\end{example}

\begin{table*}[!t]
\centering
\begin{tabularx}{0.98\textwidth}{|X c|X X X X X|} 
 \hline
 \multicolumn{2}{|X|}{\multirow{2}{*}{$\qquad\qquad\:\: R_{s,\epsilon}(0.3)$}} & \multicolumn{5}{c|}{$s$} \\
 \multicolumn{2}{|X|}{\multirow{2}{*}{}}  & $10^0$ & $10^1$ & $10^2$ & $10^3$ & $10^4$ \\
 \hline
 \multirow{5}{*}{$\qquad\qquad\epsilon$} 
 & $1.00\quad$
 & $0.61 \pm 0.54$
 & $0.47 \pm 0.37$
 & $0.45 \pm 0.43$
 & $0.45 \pm 0.44$
 & $0.45 \pm 0.44$
 \\ 
 & $0.50\quad$
 & $1.19 \pm 1.01$
 & $0.80 \pm 0.44$
 & $0.73 \pm 0.31$
 & $0.72 \pm 0.30$
 & $0.72 \pm 0.30$
 \\
 & $0.10\quad$
 & $5.27 \pm 4.60$
 & $1.79 \pm 0.95$
 & $1.08 \pm 0.14$
 & $1.01 \pm 0.02$
 & $1.00 \pm 0.00$
 \\
 & $0.05\quad$
 & $10.17 \pm 9.08$
 & $2.67 \pm 1.80$
 & $1.18 \pm 0.24$
 & $1.02 \pm 0.03$
 & $1.00 \pm 0.00$
 \\
 & $0.01\quad$
 & $49.13 \pm 44.92$
 & $9.75 \pm 8.34$
 & $1.95 \pm 1.04$
 & $1.09 \pm 0.14$
 & $1.01 \pm 0.02$
 \\
 \hline\hline
\end{tabularx}
\caption{Numerical values of the ratio $R_{s,\epsilon}(\phi)$ [defined in eq. (\ref{eq:R})] of the Pegg-Barnett probability distribution calculated on an amplified state to the Paul probability distribution for $\phi=0.3$. Each entry is an average over 1000 random qubit density matrices (i.e. single-photon states) sampled from the Hilbert-Schmidt ensemble, with terms after $\pm$ standing for maximum deviation from the mean value. All values are rounded to two significant digits. As seen, the ratio approaches the value $R=1$ in the limit of growing $s$ and vanishing $\epsilon$, as long as $s$ is much bigger than $1/\epsilon$, which we interpret as taking the limit $s\to\infty$ before taking the limit $\epsilon\to 0$.}
\label{tab:numerics}
\end{table*}

\begin{example}
The purpose of the final example is to test how the limiting procedure works in practice. In Table \ref{tab:numerics}, we provide approximate numerical values of the ratio (\ref{eq:R})
for various values of $s$ and $\epsilon$ numerically averaged over random qubit density matrices, i.e. single-photon states, sampled from the Hilbert-Schmidt ensemble \cite{random_density_matrices_Zyczkowski_2011}. As seen, the ratio approaches the value of one in the limit $s\to\infty$, $\epsilon\to 0$, provided $s$ is much bigger than $1/\epsilon$, which we interpret as taking the limit $s\to\infty$ before taking the limit $\epsilon\to 0$.
\end{example}

\section{Concluding remarks}
\label{sec:summary}
To shortly conclude, we successfully demonstrated that the Paul formalism can be obtained from the Pegg-Barnett formalism intertwined with infinite amplification of the state. Our findings suggest a number of closely related directions for future research. Firstly, the Pegg-Barnett framework is known to coincide with the Paul framework for small-amplitude coherent states, and with the Paul framework's analogue based on the Wigner distribution for large-amplitude coherent states \cite{quasiprobability_phase_Tanas_1996}. Given the former association's loose resemblance of our main result, perhaps there exists a physical process akin to amplification, which connects the Pegg-Barnett operator to the Wigner distribution? Secondly, the phase-difference operator \cite{quantum_phase_difference_operator_Luis_1993} is based on a similar construction to the Pegg-Barnett operator. Therefore, one may expect that it is also connected to the Paul formalism through some type of system amplification. Finally, our work, as well as the previously mentioned Refs \cite{quantum_phase_Q_amplification_Schleich_1992,De_Palma_2017}, suggest that further ``quantum to classical'' transitions are potentially possible by virtue of the amplification procedure.

\begin{acknowledgements}
We thank S. Cusumano, A. Mandarino, A. Luk\v{s} and J. K\v{r}epelka for their remarks, as well as M. Paris for recommending Ref. \cite{Paul_formalism_QLA_Lalovi_1998} to us. We acknowledge support by the Foundation for Polish Science (International Research Agenda Programme project, International Centre for Theory of Quantum Technologies, Grant No. 2018/MAB/5, cofinanced by the European Union within the Smart Growth Operational Program). This work is partially carried out under International Research Agenda Programme, project no. FENG.02.01- IP.05-0006/23, financed by the FENG program 2021-2027, Priority FENG.02, Measure FENG.02.01., with the support of the Foundation for Polish Science.
\end{acknowledgements}

\bibliography{report}
\bibliographystyle{obib}

\appendix
\section{Proof of eq. (\ref{eq:lemma})} 
\label{app:integral_limit}
\setcounter{equation}{0}
\renewcommand{\theequation}{\ref{app:integral_limit}\arabic{equation}}
In this appendix, we prove eq. (\ref{eq:lemma}), meaning that in the formula for the expectation value (\ref{eq:PB_expectation_continuous_limit_general}) calculated on an amplified state the limiting procedure and the integration can be swapped. In other words, we prove that
\begin{align} \label{eq:lemma_appendix}
\begin{split}
    \lim_{s\to\infty} \int_{0}^{2\pi}d\phi \,  I_s(\phi) =
        \int_{0}^{2\pi}d\phi \, 
        \lim_{s\to\infty} I_s(\phi),
\end{split}
\end{align}
where we denoted for short
\begin{align}
    I_s(\phi) \coloneqq   f\left(e^{i\phi}\right)
        P_{\textnormal{PB}}^{(s)}[\phi|\mathcal{A}_{1+s\epsilon}(\hat{\rho})].
\end{align}

According to the dominated convergence theorem, a sufficient condition for eq. (\ref{eq:lemma_appendix}) to hold is if there exists an $s$-independent function $J(\phi)$, such that
\begin{align} \label{eq:J_integral}
    \int_{0}^{2\pi}d\phi\,J(\phi)<\infty
\end{align}
and for all $s$, $\phi$
\begin{align} \label{eq:I_J_relation}
    |I_s(\phi)|\leqslant J(\phi).
\end{align}
We make the following guess:
\begin{align}
\begin{split}
    J(\phi) = \frac{\max_x |f(x)|}{2\pi\epsilon} 
        \left(\sum_{\substack{m=0}}^{\infty}
        \frac{1}{\sqrt{m! \epsilon^m}}\right)^2.
\end{split}
\end{align}
We stress that in Proposition \ref{th:main_result}, the limit $s\to\infty$ is taken before the limit $\epsilon\to 0$, meaning that $J(\phi)$ is finite. Condition (\ref{eq:J_integral}) is thus obviously fulfilled. It remains to show eq. (\ref{eq:I_J_relation}).

To this end, we start with the expression (\ref{eq:P_final_proof}). Using the fact that $s/(1+s\epsilon)\leqslant 1/\epsilon$ and $\left(\frac{s\epsilon}{1+s\epsilon}\right)^{s\mu}\leqslant 1$, as well as basic properties of absolute value, we get
\begin{align}
\begin{split}
    \left| I_s(\phi)\right|
        \leqslant \frac{ \left| f\left(e^{i\phi}\right)\right|}{2\pi\epsilon} 
        \int_0^1 d\mu \left(\sum_{\substack{m=0}}^{s-s\mu}
        \frac{1}{\sqrt{m!}}
        \sqrt{\prod_{k=1}^m \frac{s\mu+k}{1+s\epsilon}}\right)^2.
\end{split}
\end{align}
We then observe that under the product we have $k\leqslant m \leqslant s-s\mu$, which yields
\begin{align}
\begin{split}
    \left|I_s(\phi)\right|
        \leqslant \frac{\left| f\left(e^{i\phi}\right)\right|}{2\pi\epsilon} 
        \int_0^1 d\mu \left(\sum_{\substack{m=0}}^{s-s\mu}
        \frac{1}{\sqrt{m!}}
        \sqrt{\frac{s}{1+s\epsilon}}^m\right)^2.
\end{split}
\end{align}
In the last step, we once again use $s/(1+s\epsilon)\leqslant 1/\epsilon$. Furthermore, because all the summands are non-negative, we extend the sum to infinity. Finally, we can perform the integral over $\mu$. In the end, we have
\begin{align}
\begin{split}
    \left|I_s(\phi)\right|
        \leqslant \frac{\left| f\left(e^{i\phi}\right)\right|}{2\pi\epsilon} 
        \left(\sum_{\substack{m=0}}^{\infty}
        \frac{1}{\sqrt{m! \epsilon^m}}\right)^2.
\end{split}
\end{align}
Substituting this into the l.h.s. of eq. (\ref{eq:I_J_relation}) and bounding $|f|$ from above by its largest value finishes the proof.

Note that this result also shows that one can interchange the limit $s\to\infty$ with the integral over $\mu$ in eq. (\ref{eq:Z_summation_before_exchange}).

\section{Proof of eq. (\ref{eq:good_term})} 
\label{app:bottom_line}
\setcounter{equation}{0}
\renewcommand{\theequation}{\ref{app:bottom_line}\arabic{equation}}
In this appendix, we want to show that the limit $s\to\infty$ of the bottom line of eq. (\ref{eq:Z_summation_before_exchange}) is given by eq. (\ref{eq:good_term}).

Let us denote
\begin{align}
\begin{split}
    S_{x,y} = \sum_{\substack{m=x}}^{y}
        \psi_m\frac{e^{-im\phi}}{\sqrt{m!}}
        \sqrt{\prod_{k=1}^m \frac{s\mu+k}{1+s\epsilon}},
\end{split}
\end{align}
so that the bottom line of eq. (\ref{eq:Z_summation_before_exchange}) equals $\left\lvert S_{0,s-s\mu}\right\rvert^2$. We now introduce an auxiliary parameter $d\in\mathbb{N}$, $d\leqslant s-s\mu$, and split the sum over $m$ into two sums: one from $0$ to $d-1$, and the second one from $d$ to $s-s\mu$. We get
\begin{align}
\begin{split}
    \left\lvert S_{0,s-s\mu}\right\rvert^2 = \left\lvert S_{0,d-1} + S_{d,s-s\mu}\right\rvert^2.
\end{split}
\end{align}
We stress that $S_{x,y}$ is independent of $d$.

Using basic properties of the absolute value, we get the following bounds:
\begin{align} \label{eq:bottom_line_bounds}
\begin{split}
    \left\lvert S_{0,d-1}\right\rvert^2 - \left\lvert S_{d,s-s\mu}\right\rvert^2
    \leqslant \left\lvert S_{0,s-s\mu}\right\rvert^2 
    \leqslant \left\lvert S_{0,d-1}\right\rvert^2 + \left\lvert S_{d,s-s\mu}\right\rvert^2.
\end{split}
\end{align}
Our approach is to calculate the limit $s \rightarrow \infty$ separately for the two terms present in the bounds, ultimately showing that both bounds coincide and are therefore equal to the limit of $\left\lvert S_{0,s-s\mu}\right\rvert^2$.

Because $|\psi_m|\leqslant 1$ and $\left\lvert e^{-im\phi}\right\rvert=1$ we can see that
\begin{align}
\begin{split}
    \left\lvert S_{d,s-s\mu}\right\rvert \leqslant \sum_{\substack{m=d}}^{s-s\mu}
        \frac{1}{\sqrt{m!}}\sqrt{\prod_{k=1}^m \frac{s\mu+k}{1+s\epsilon}}.
\end{split}
\end{align}
Furthermore, $k$ is bounded from above by $m$, which is in turn bounded by $s-s\mu$. Thus
\begin{align} \label{eq:garbage_term}
\begin{split}
    \left\lvert S_{d,s-s\mu}\right\rvert & \leqslant \sum_{\substack{m=d}}^{s-s\mu}
        \frac{1}{\sqrt{m!}}\sqrt{\prod_{k=1}^{m} \frac{s}{1+s\epsilon}} \\
        & \to \sum_{\substack{m=d}}^{\infty}
        \frac{1}{\sqrt{m!}}\sqrt{\frac{1}{\epsilon}}^m = \sum_{\substack{m=0}}^{\infty}
        \frac{1}{\sqrt{(d+m)!}}\sqrt{\frac{1}{\epsilon}}^{d+m},
\end{split}
\end{align}
where in the second transition we performed the limit $s\to\infty$ and in the third (final) transition we renumbered the sum.

In the case of $S_{0,d-1}$, $m$ is bounded from above by the finite number $d$, which means that $k$ yields no contribution in the limit of infinite $s$ and so
\begin{align} \label{eq:good_term_nearly}
\begin{split}
    \left\lvert S_{0,d-1} \right\rvert^2 \to \left\lvert \sum_{\substack{m=0}}^{d-1}
        \psi_m\frac{e^{-im\phi}}{\sqrt{m!}}
        \sqrt{\frac{\mu}{\epsilon}}^m \right\rvert^2.
\end{split}
\end{align}

Since $d$ was chosen as an arbitrary number smaller than $s-s\mu$ and, as already pointed out,  $S_{0,s-s\mu}$ is independent of $d$, after taking the limit $s\to\infty$, we can pick up whatever value of $d\in\mathbb{N}$ we want to. In particular, we can now also take the limit $d\to\infty$. Looking at eq. (\ref{eq:garbage_term}), we can see that in this limit $S_{d,s-s\mu}$ vanishes -- because this series is absolutely convergent we can interchange the sum with the limit.
Thus, due to eq. (\ref{eq:bottom_line_bounds}), in the limit of infinite $s$, the bottom line of eq. (\ref{eq:Z_summation_before_exchange}) coincides with eq. (\ref{eq:good_term_nearly}) with $d\to\infty$. This proves eq. (\ref{eq:good_term}).

\section{Discussion of amplification rate} 
\label{app:linearity}
\setcounter{equation}{0}
\renewcommand{\theequation}{\ref{app:linearity}\arabic{equation}}
In this appendix, we show why Proposition \ref{th:main_result} no longer holds if the amplification rate is set to be non-linear in $s$, as opposed to the linear dependence $\kappa = 1 + s\epsilon$.

First, let us briefly discuss the case in which $\kappa$ is independent of $s$. In this case, we find that the analogue of eq. (\ref{eq:binomial_manipulation_before}) reads
\begin{align}
\begin{split}
    P_{\textnormal{PB}}^{(s)}[\phi|\mathcal{A}_{\kappa}(\hat{\rho})] 
        = \frac{1}{2\pi \kappa} \sum_{\substack{j=0}}^s
        \left(\frac{\kappa-1}{\kappa}\right)^{j}\\
        \times \sum_{\substack{m,n=0}}^{s-j}
        \rho_{mn}\frac{e^{i(n-m)\phi}}{\sqrt{\kappa}^{m+n}}
        \sqrt{\binom{j+m}{j}\binom{j+n}{j}},
\end{split}
\end{align}
Taking the limit of infinite amplification, $\kappa\to \infty$, we get simply zero.

To see what happens for $\kappa$ depending on $s$ in a non-linear way, let us consider $\kappa = 1 + w(s,\epsilon)$, where $w$ is a polynomial in $s$ and $\epsilon$ of at least second degree in $s$. In this case, by following exactly the same steps as in the original derivation, we find that the analogue of eq. (\ref{eq:Z_summation_before_exchange}) reads
\begin{align}
\begin{split}
    P_{\textnormal{PB}}^{(s)}[\phi|\mathcal{A}_{1+w(s,\epsilon)}(\hat{\rho})]
        = \frac{(2\pi)^{-1}s}{[1+w(s,\epsilon)]} 
        \int_0^1 d\mu \left(\frac{w(s,\epsilon)}{1+w(s,\epsilon)}\right)^{s\mu}\\
        \times\left\lvert\sum_{\substack{m=0}}^{s-s\mu}
        \psi_m\frac{e^{-im\phi}}{\sqrt{m!}}
        \sqrt{\prod_{k=1}^m \frac{s\mu+k}{1+w(s,\epsilon)}}\right\rvert^2.
\end{split}
\end{align}
By construction, $\lim_{s\to\infty} s/w(s,\epsilon) = 0$. From this, it is easy to see that the whole equation vanishes in the limit $s\to\infty$.

\end{document}